\newcommand{\D}{{\mathrm{d}}}
\newtheorem{theorem}{Theorem}
\newtheorem{lemma}{Lemma}
\newtheorem{corollary}{Corollary}
\begin{document}

\begin{frontmatter}

\title{Local equivalence of reversible \\ and general Markov kinetics}

\author{Alexander N. Gorban}

 \ead{ag153@le.ac.uk}
\address{Department of Mathematics, University of Leicester, Leicester, LE1 7RH, UK}

\begin{abstract}
We consider continuous--time  Markov kinetics with a finite number of states and a
positive equilibrium $P^*$. This class of systems is significantly wider than the systems
with detailed balance. Nevertheless, we demonstrate that for an arbitrary probability
distribution $P$ and a general system there exists a system with detailed balance and the
same equilibrium that has the same velocity $\D P / \D t$ at point $P$. The results are
extended to nonlinear systems with the generalized mass action law.\end{abstract}
\begin{keyword}
detailed balance \sep Lyapunov function \sep decomposition \sep entropy \sep uncertainty
 \PACS
02.50.Ga \sep 05.20.Dd
\end{keyword}

\date{}

\end{frontmatter}

\section{Introduction}
\subsection{Detailed balance and beyond \label{Overview}}

The principle of detailed balance is one of the most celebrated results in kinetics. A
kinetic system is represented as a mixture of independent elementary processes
(collisions or elementary reactions, for example). Due to the principle of detailed
balance, {\em at equilibrium, each elementary process should be equilibrated by its
reverse process.} Kinetics is decomposed into pairs of mutually inverse processes and in
many problems we can consider these pairs separately.

We study relations between the systems with and without detailed balance. In this Section, we
briefly overview the main results of the work. Then, in Sec.~\ref{History} we review the
history of the problem. We prove the local equivalence theorem for the Markov processes
in Sec.~\ref{Markov} and give there the simple examples. The nonlinear generalizations
are presented in  Sec.~\ref{Nonlin}.

In Sec.~\ref{Markov} we start from the first order kinetics without the detailed balance assumption.
The general first order kinetic equation has the form:
\begin{equation}\label{MAsterEq0}
\frac{\D p_i}{\D t}= \sum_{j, \, j\neq i} (q_{ij}p_j-q_{ji}p_i) \, ,
\end{equation}
where $q_{ij}$ ($i,j=1,\ldots, n$, $i\neq j$) are non-negative. This system of
equations (master equations or Kolmogorov's equations) describes dynamics of
non-negative variables $p_i$ ($i=1,\ldots, n$). These variables may be
considered as probabilities (then $\sum_i p_i =1$) or concentrations. For the
corresponding  states or components we use the notation $A_i$. In this
notation, $q_{ij}$ is the {\em rate constant} for transitions $A_j \to A_i$.

Let us assume that system (\ref{MAsterEq0}) has a positive equilibrium
$P^*=(p_i^*)$, $p_i^*>0$:
\begin{equation}\label{MasterEquilibrium}
\sum_{j, \, j\neq i} q_{ij}p^*_j = \left(\sum_{j, \, j\neq i}
q_{ji}\right)p^*_i \;\mbox{ for all } i=1,\ldots n \, .
\end{equation}
This is the so-called {\em balance equation}.

The {\em detailed balance} condition is much stronger. It assumes that the sums in
the left and right hand sides of Eq. (\ref{MasterEquilibrium}) are equal term by term:
\begin{equation}\label{detBal}
q_{ij}p^*_j=q_{ji}p^*_i \;\mbox{ for all } i,j=1,\ldots n, \, i \neq j \,  .
\end{equation}

For the number of states $n>2$, a simple comparison of dimensions demonstrates
that there are much more general systems with the given positive equilibrium
$P^*$ (\ref{MasterEquilibrium}) (dimension is $(n-1)^2$) than the systems with detailed balance with
equilibrium $P^*$ (\ref{detBal}) (dimension is $\frac{n(n-1)}{2}$). Surprisingly, {\em for
every given distribution $P$, the set of possible velocities $\D P / \D t$ for
general Markov kinetics with equilibrium $P^*$ is the same that for Markov
kinetics with detailed balance and the same equilibrium.}
This is the central result of the paper (Theorem~\ref{Theorem1} in Sec.~\ref{Markov}).

We demonstrate this in two steps. First, we use the  representation of  a general Markov chain with a given
positive equilibrium as a combination with non-negative coefficients
of several simple cycles with the same equilibrium.

Secondly, we demonstrate this equivalence for a
simple cycle of transitions with positive constants
\begin{equation}\label{cycle}
A_1 \to A_2 \to \ldots \to A_n \to A_1 \, .
\end{equation}
For the equilibrium $P^*$ the constants of the cycle are $q_{i+1 i}=\kappa
/p_i^*$ (we use here the standard convention about the cyclic numeration,
$n+1=1$).

Thus, if we observe the Markov kinetics at one point then we can not distinguish
general systems from systems with detailed balance because the sets of possible
velocities coincide. In particular, they have the same set of Lyapunov
functions.

Our main results allow us to decompose any Markov kinetics (or generalized mass action
law kinetics with semi-detailed balance) into pairs of mutually inverse elementary
processes with the same equilibrium. If the system does not satisfy the principle of
detailed balance then this decomposition depends on the state. Nevertheless, in some
problems it is still convenient to consider these pairs separately.

In this paper, we give two examples of the application of Theorem~\ref{Theorem1}: the
evaluation of logarithmic decrement  for general Markov chains and a simple proof of the
Morimoto $H$-theorem for all the Csisz\'ar--Morimoto divergencies. We give also the
nonlinear generalization of Theorem~\ref{Theorem1} for the systems which obey the
generalized {\em Mass Action Law} (MAL).

\begin{figure}
\centering{\includegraphics[width=0.5\textwidth]{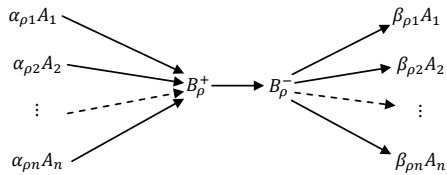}
\caption{\label{Compounds}Elementary reaction with intermediate compounds.}}
\end{figure}

Master equation is a source for many other kinetic equations. In particular, in Sec.~\ref{Nonlin} we consider
complex reactions with intermediate compounds (Fig.~\ref{Compounds}) under two
asymptotic conditions
\begin{itemize}
\item{The compounds $B_j$ are in fast equilibrium with the corresponding input or output
    reagents;}
\item{They exist in very small concentrations compared to other components.}
\end{itemize}
For compounds transitions the first order kinetics is assumed because of small
concentrations of compounds (only first order terms survive). These assumptions allow us
to produce the reaction rates for the overall reaction from Fig.~\ref{Compounds} in the
form of the generalized MAL:
\begin{equation}\label{generalizedMAL0}
r_{\rho}=\varphi_{\rho}\exp\left(\frac{\sum_i\alpha_{\rho i}
\mu_i}{RT}\right)\, , \; \varphi_{\rho} \geq 0 \, .
\end{equation}
where $\mu_i$ is the chemical potential  of the component  $A_i$, $\rho$ is the reaction
number, $\alpha_{\rho i}$ are the input stoichiometric coefficients
(Fig.~\ref{Compounds}). Both $\alpha_{\rho i}$ and $\beta_{\rho i}$ are non-negative
integers. We use notations $\boldsymbol{\alpha}_{\rho}$ and $\boldsymbol{\beta}_{\rho}$
for vectors wit coordinates $\alpha_{\rho i}$ and $\beta_{\rho i}$. The positive
functions $\varphi_{\rho}$ are called the {\em kinetic factors} whereas
$\exp\left({\sum_i\alpha_{\rho i} \mu_i}/RT\right)$ are the {\em Boltzmann factors}.

The balance condition of the first order kinetics of compounds (\ref{MasterEquilibrium}) transforms in
the semi-detailed balance condition (that is known
also as the complex or the cyclic balance condition):
$$
\sum_{\rho, \,\boldsymbol{\alpha}_{\rho}=\boldsymbol{\nu}} \varphi_{\rho}\equiv
\sum_{\rho, \,\boldsymbol{\beta}_{\rho}=\boldsymbol{\nu}} \varphi_{\rho}
$$
for any vector $\boldsymbol{\nu}$ from the set of all vectors
$\{\boldsymbol{\alpha}_{\rho}, \boldsymbol{\beta}_{\rho}\}$. Kinetics with the
generalized MAL and the semi-detailed balance conditions give the natural nonlinear
generalizations of Markov processes. In particular, the entropy production for these
systems at any nonequilibrium state is positive.

If we assume for the Markov kinetics of compounds that the positive
equilibrium is the point of detailed balance (\ref{detBal}) then the kinetic factors
$\varphi_{\rho}$ satisfy the stronger condition of detailed balance:
$$
\varphi_{\rho}^+\equiv \varphi_{\rho}^- \; \mbox{ for all } \rho \, ,
$$
where $\varphi_{\rho}^+$ is the kinetic factor for the direct reaction and
$\varphi_{\rho}^-$ is the kinetic factor for the reverse reaction.

The class of systems with semi-detailed balance is much wider than the class of systems with detailed
balance. Nevertheless, locally they coincide: {\em the set of  possible velocities for systems with semi-detailed balance coincide
with the set of possible velocities for the systems with detailed
balance for given thermodynamic functions and any given state}  (Sec.~\ref{Nonlin}).

The systems with generalized MAL and semi-detailed balance are the nonlinear analogs of
the Markov processes and the local equivalence of the generalized MAL systems with
detailed and semi-detailed balance is the analog and a consequence of
Theorem~\ref{Theorem1}.

\subsection{A bit of history \label{History}}

In 1872, Boltzmann introduced the principle of detailed balance for collisions
and used it to prove his $H$-theorem \cite{Boltzmann1872}.
Boltzmann's proof of the positivity of entropy production for systems with detailed
balance is very transparent because it is sufficient to prove this positivity just for a
couple of mutually inverse elementary processes.

In 1887, Lorentz \cite{Lorentz1887} objected Boltzmann: he insisted that some collisions
of polyatomic molecules do not have reverse collisions and cannot satisfy the principle
of detailed balance. Immediately, Boltzmann realized that there exists a much weaker
condition sufficient for the $H$-theorem \cite{Boltzmann1887}. In 1981, it was proven
that the Lorentz objections are wrong and the principle of detailed balance is valid for
polyatomic molecules \cite{CercignaniLampis1981}. This is not very surprising because the
detailed balance follows from microreversibility (or $T$-invariance of the fundamental
equations of mechanics or quantum mechanics). Nevertheless, the Boltzmann discovery is
valuable by itself and is used for many kinetic equations.

This condition was rediscovered several times. It is known as the semi-detailed balance
condition, the cyclic balance condition or the complex balance condition. In 1952,
Stueckelberg proposed a proof of the semi-detailed balance condition for the Boltzmann
equation \cite{Stueckelberg1952}. His proof is based on the Markov model of elementary
events. Recently, the Stueckelberg approach was extended to prove the semi-detailed
balance condition for the generalized MAL kinetics \cite{GorbanShahzad2011}.

The complex balance condition for chemical kinetics was introduced by Horn and Jackson in
1972 \cite{HornJackson1972} independently of Boltzmann's work. Now it is used for
mathematical modeling in chemical kinetics and engineering \cite{SzederkHangos2011}.
Boltzmann's idea about cyclic balance developed in physical kinetics was independently
rediscovered in the theory of Markov processes and it is proved that any recurrent Markov
process can be decomposed into directed cycles \cite{Kalpazidou2006}.

The principle of detailed balance was crucially important in the development of the
Metropolis--Hastings and other Markov chain Monte Carlo algorithms from the very beginning
\cite{Metropolis1953}. Technically, it is much easier to use the detailed balance conditions
(\ref{detBal}) than to follow more intricate balance conditions. Detailed balance was
considered as a necessary condition for construction of Monte Carlo algorithms as a
``systematic design principle" \cite{Katsoulakis2003}. It was demonstrated that it helps
to reduce the uncertainty of some observables in stochastic numerics \cite{Noe2008}.

Nevertheless, there are many examples of efficient Monte Carlo computations without
detailed balance. Sometimes computational models without detailed balance are constructed
because of the physical nature of the systems. For example, the models of inelastic
processes in particle physics \cite{Grassberger1979} or in granular media \cite{Dean2003}
may violate the principle of detailed balance. The general theories of stochastic
cellular automata with Gibbsian equilibria but without compulsory detailed balance were
developed \cite{Marroquin1991}. It is widely recognized that the balance equation (not
the detailed balance) is necessary and sufficient condition for invariance (stationarity)
of the desired equilibrium distribution. Under some more technical irreducibility
conditions, the Monte Carlo simulations will converge to this equilibrium
\cite{Manousiothakis1999,Athenesa2007}.

The interplay between reversible (with detailed balance) and irreversible Markov chains
is non-trivial and important for many applications. Recently, it was demonstrated that
the local deformation of the reversible Markov processes into irreversible ones helps to
create efficient computational Monte Carlo algorithms \cite{Turitsyn2011}.

Much efforts were applied to verification of microscopic reversibility and its
consequences, the detailed balance conditions  and the Onsager reciprocal relations, in many experimental
systems \cite{Miller1960,Thornton1971,Driller1979,Rettner1989,YablonskyGor_atal2011}. To
check experimentally the detailed balance conditions it is necessary to deal with a
complex reaction that can be formally equilibrated without detailed balance. If not, then
one tests not the detailed balance but just the equilibrium condition as it was mentioned
in \cite{Henley1959}.

The detailed balance conditions are very natural and appealing. They simplify many
computations and proofs. Thus, they are used in many applications and models. For
some physical and chemical systems, detailed balance has a solid background, the
$T$-invariance of the fundamental equations, but it is also used beyond the proven
microreversibility. When modelling with detailed balance meets some difficulty then
the problem about relations between reversible and irreversible systems arises again and
detailed balance is substituted by more general conditions. Nevertheless, the
convenience, beauty and some intrinsic benefits of the phenomenon, have forced researchers to return to detailed
balance if it is possible without a conflict with reality. There are many examples of
this ``pendulum" in scientific publications: accept detailed balance -- criticize detailed balance --
go to more general conditions -- realize the benefits of detailed balance -- return to detailed balance -- ...

At the same time, the consequences of the principle of detailed balance are extended to
the situations where it was not used before. Thus, recently the multiscale limit of the
systems of reversible reactions was studied when some of the equilibrium concentrations tend
to zero. The extended principle of detailed balance was proved for the systems with some
irreversible reaction \cite{GorbanYablonsky2011,GorbanMirYab2012}.

In Theorem~\ref{Theorem1}, we compare the sets of possible velocities, $\D P/\D t$, for
two classes of systems: (i) general first order kinetics (\ref{MAsterEq0}) with the given
positive equilibrium $P^*$ and (ii) first order kinetics with detailed balance
(\ref{detBal}) and the same positive equilibrium.

Understanding the structure of the sets of possible velocities can
provide additional information about attainable states of the system
which is helpful in the modelling context. There are many other reasons too.
It is known that different types of kinetics data bear different degrees of
reliability. It would therefore  be very attractive to study the
consequences of the information of each level of reliability separately \cite{GorbanTree2012}.
For example,  uncertainty about equilibria in the system is usually significantly
lower than that of the reaction rate constants. The value of the equilibrium gives
us some information about dynamics: the set of possible velocities $\D P/\D t$
is not arbitrarily wide at a given state and for given equilibrium. For the
systems with detailed balance, this set is a polyhedral cone which allows a simple
description (Sec.~\ref{Quasichem}). Due to Theorem~\ref{Theorem1}, however, this cone is also the
set of possible velocities for the general master equation. Therefore,
to distinguish the detailed balance systems from the general ones we have to involve data about
$\D P/\D t$ for several significantly distant distributions $P$.

Another example is to employ the knowledge of the sets of possible velocities
for estimating attainable regions for kinetics. The idea to use the equilibrium information to estimate the
attainable sets in kinetics was proposed in 1964 by Horn \cite{Horn1964} and developed
further in chemical kinetics and chemical engineering
\cite{Gorban1979,obh,Glasser1987,Hildebrandt1990,GorbKagan2006} (for more detailed review
see \cite{GorbanTree2012}). If the sets of possible velocities coincide then the
estimated attainable regions coincide too. The knowledge of the sets of possible
velocities is also important for the analysis of observability, identifiability and controllability of the
systems.

\section{Local equivalence of  general Markov systems and systems with detailed balance \label{Markov}}

\subsection{Global decomposition into cycles, local decomposition into steps, and the equivalence theorem}

Let us start from master equation (\ref{MAsterEq0}). The coefficient $q_{ij}$ is the {\em
rate constant} for transitions $A_j \to A_i$. Any set of non-negative coefficients
$q_{ij}$ ($i\neq j$) corresponds to a master equation. Therefore, the set of all master
equations (\ref{MAsterEq0}) may be considered as the non-negative orthant in
$\mathbb{R}_+^{n(n-1)} \subset \mathbb{R}^{n(n-1)}$. (The non-negative orthant is the set
of all vectors with only non-negative components.)

We assume that system (\ref{MAsterEq0}) has a positive equilibrium $P^*=(p_i^*)$,
$p_i^*>0$ and the balance condition (\ref{MasterEquilibrium}) holds. The sum of these $n$
{\em balance conditions} is a trivial identity. Let us delete any single equation from
(\ref{MasterEquilibrium}), for example, the last one (for $i=n$). Each of the remaining
equations  includes the variable $q_{in}$ which is not present in other equations
($i=1,\ldots ,n-1$). Therefore, for given positive $P^*$, there are $n-1$ independent
conditions on $q_{ij}$ ($i,j=1,\ldots, n$, $i\neq j$) in (\ref{MasterEquilibrium}).

Thus, the balance conditions (\ref{MasterEquilibrium}) define for a given positive
equilibrium a $(n-1)^2$-dimensional linear subspace $L^*$ in the $n(n-1)$ dimensional
space of $q_{ij}$ ($i\neq j$).  A vector of positive coefficients $q_{ij}^*=1/p_j^*$
satisfies (\ref{MasterEquilibrium}) and belongs to $L^*$. This vector belongs to the
interior of the non-negative orthant $\mathbb{R}_+^{n(n-1)}$. Therefore, the intersection
$\mathbb{R}_+^{n(n-1)} \cap L^*$ includes a vicinity of $q_{ij}^*$ in $L^*$ and is a
$(n-1)^2$-dimensional cone. Thus, the non-negative solutions of (\ref{MasterEquilibrium})
form a $(n-1)^2$-dimensional closed cone in $\mathbb{R}_+^{n(n-1)}$.

The systems with detailed balance (\ref{detBal}) for a given positive equilibrium form a smaller cone.
Under these conditions, there are only $\frac{n(n-1)}{2}$ independent coefficients among
$n(n-1)$ numbers $q_{ij}$. For example, we can arbitrarily select $q_{ij} \geq 0$ for
$i>j$ and then take $q_{ij} = q_{ji}\frac{p_i^*}{p^*_j}$ for $i<j$. So,  for given $P^*$,
the cone of the detailed balance systems (\ref{detBal}) can be considered as a non-negative orthant in
$\mathbb{R}^{\frac{n(n-1)}{2}}$ embedded in $\mathbb{R}_+^{n(n-1)}$.

If the balance condition (\ref{MasterEquilibrium}) holds then system
(\ref{MAsterEq0}) may be rewritten in a convenient  equivalent form:
\begin{equation}\label{MAsterEq1}
\frac{\D p_i}{\D t}= \sum_{j, \, j\neq i}
q_{ij}p^*_j\left(\frac{p_j}{p_j^*}-\frac{p_i}{p_i^*}\right) \, .
\end{equation}
With this form of master equation, it is straightforward to calculate the time
derivative of the quadratic divergence, a weighted $l_2$ distance between $P$
and $P^*$, $H_2(P\|P^*)=\sum_i\frac{(p_i-p^*_i)^2}{p^*_i}$:
\begin{equation}\label{QuadLyap}
\frac{\D H_2(P\|P^*)}{\D t}=-\sum_{i,j, \, j\neq i}q_{ij}p^*_j\left(\frac{p_i}{p^*_i} - \frac{p_j}{p^*_j}\right)^2 \leq 0\, .
\end{equation}
This time derivative is {\em strictly negative} if for a transition $A_j \to
A_i$ the rate constant is positive, $q_{ij}>0$, and $\frac{p_i}{p^*_i} \neq
\frac{p_j}{p^*_j}$. Hence, if the state $P$ is not an equilibrium (i.e., the
right hand side in (\ref{MAsterEq1}) is not zero) then $\frac{\D
H_2(P\|P^*)}{\D t}<0$.

Let us introduce the following notation for a given number of states $n$:
\begin{itemize}
\item{$\mathcal{Q}^n_{\rm B}(P^*)$ is the cone of the vectors of non-negative
    coefficients $q_{ij}$ ($i\neq j$) which satisfy the balance conditions
    (\ref{MasterEquilibrium}), that is, the set
    of all Markov processes with the equilibrium distribution $P^*$;}
\item{$\mathcal{Q}^n_{\rm DB}(P^*)$ is the cone of the vectors of non-negative
    coefficients $q_{ij}$ ($i\neq j$) which satisfy the detailed  balance conditions
    (\ref{detBal}), that is,
    the set of all Markov processes with detailed balance and the equilibrium distribution $P^*$.}
\end{itemize}

If a system satisfies the detailed balance condition (\ref{detBal}) then the balance
condition (\ref{MasterEquilibrium}) holds too: it holds term by term, even without
summation. Therefore, $\mathcal{Q}^n_{\rm DB}(P^*) \subset \mathcal{Q}^n_{\rm B}(P^*)$.
Moreover, comparing dimension we find that this inclusion is strong. Indeed, $\dim
\mathcal{Q}^n_{\rm B}(P^*)=(n-1)^2$, $\dim \mathcal{Q}^n_{\rm DB}(P^*)=\frac{n(n-1)}{2}$.
If $n>2$ then $\frac{n(n-1)}{2} < (n-1)^2$, hence,
\begin{equation}\label{inclusion1}
\mathcal{Q}^n_{\rm DB}(P^*) \subsetneqq \mathcal{Q}^n_{\rm B}(P^*) \subsetneqq \mathbb{R}_+^{n(n-1)} \, .
\end{equation}
The cone of the systems with detailed balance is, in some sense, much smaller than the
cone of the systems with the balance condition: the difference between their dimensions
is $\frac{(n-1)(n-2)}{2}$.

Now, let us consider the right hand side vector fields of the systems
(\ref{MAsterEq0}) at the point $P\neq P^*$. For each cone of the coefficients
$q_{ij}$ the vectors of the possible velocities, $\D P/ \D t$, also form a
cone. Let us introduce the following notation:
\begin{itemize}
\item{$\mathbf{Q}^n_{\rm B}(P,P^*)$ is the cone of the possible velocities, $\D P/ \D
    t$, at the point $P$ for $(q_{ij}) \in \mathcal{Q}^n_{\rm B}(P^*)$;}
\item{$\mathbf{Q}^n_{\rm DB}(P,P^*)$ is the cone of of the possible velocities, $\D
    P/ \D t$, at point $P$ for $(q_{ij}) \in \mathcal{Q}^n_{\rm DB}(P^*)$.}
\end{itemize}
$\mathbf{Q}^n_{\rm B}(P,P^*)$ is the cone of all possible velocities for Markov kinetics
at the point $P$ if the equilibrium is $P^*$. $\mathbf{Q}^n_{\rm DB}(P,P^*)$ is the cone
of these velocities for Markov kinetics with detailed balance.

Surprisingly, {\em for every given distribution $P$, the set of possible velocities $\D P
/ \D t$ for general Markov kinetics with equilibrium $P^*$ is the same that for Markov
kinetics with detailed balance and the same equilibrium.} The following theorem is the
central result of this work.
\begin{theorem}\label{Theorem1}$\mathbf{Q}^n_{\rm B}(P,P^*)=\mathbf{Q}^n_{\rm DB}(P,P^*)$\end{theorem}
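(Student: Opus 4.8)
The plan is to establish the non-trivial inclusion $\mathbf{Q}^n_{\rm B}(P,P^*)\subseteq\mathbf{Q}^n_{\rm DB}(P,P^*)$; the opposite inclusion is immediate since $\mathcal{Q}^n_{\rm DB}(P^*)\subset\mathcal{Q}^n_{\rm B}(P^*)$. Throughout write $x_i=p_i/p^*_i$. The first step is a global reduction to cycles. Passing from the rate constants to the equilibrium fluxes $c_{ij}=q_{ij}p^*_j\ge0$, the balance conditions (\ref{MasterEquilibrium}) say exactly that $(c_{ij})$ is a circulation, $\sum_{j}c_{ij}=\sum_{j}c_{ji}$ for every $i$. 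The cone of non-negative circulations on the complete directed graph on $n$ nodes is generated by the indicator circulations of the simple directed cycles (the classical cycle decomposition, cf.~\cite{Kalpazidou2006}), and dividing by $p^*_j$ turns such an indicator into a balanced simple cycle $A_{i_1}\to\cdots\to A_{i_k}\to A_{i_1}$ with constants $q_{i_{m+1}i_m}=\kappa/p^*_{i_m}$ as in (\ref{cycle}). Hence every element of $\mathcal{Q}^n_{\rm B}(P^*)$ is a non-negative combination of balanced simple cycles, and, applying the linear velocity map at $P$, the cone $\mathbf{Q}^n_{\rm B}(P,P^*)$ is generated by the velocities $v^C$ of balanced simple cycles. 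As $\mathbf{Q}^n_{\rm DB}(P,P^*)$ is itself a cone, it suffices to prove $v^C\in\mathbf{Q}^n_{\rm DB}(P,P^*)$ for every balanced simple cycle $C$.

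For the second step I record two facts in the coordinates $x_i$. A reversible pair $A_j\rightleftharpoons A_i$ with equilibrium flux $c_{ij}\ge0$ produces the velocity $c_{ij}(x_j-x_i)(e_i-e_j)$, where $e_i$ is the $i$-th standard basis vector of $\mathbb{R}^n$; summing over pairs shows that $\mathbf{Q}^n_{\rm DB}(P,P^*)$ is the cone generated by the vectors $e_i-e_j$ with $x_i<x_j$. For a balanced simple cycle $A_{i_1}\to\cdots\to A_{i_k}\to A_{i_1}$ the flux along the edge $A_{i_m}\to A_{i_{m+1}}$ equals $\kappa x_{i_m}$, so the $i_m$-component of $v^C$ is $\kappa(x_{i_{m-1}}-x_{i_m})$, i.e. $v^C$ is the net inflow of the circulation with flux $\kappa x_{i_m}$ on each edge $A_{i_m}\to A_{i_{m+1}}$.

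The core of the argument is then an induction on the length $k$ of $C$. For $k=2$ the cycle is the reversible pair $A_i\rightleftharpoons A_j$, which already satisfies detailed balance with equilibrium $P^*$, so $v^C\in\mathbf{Q}^n_{\rm DB}(P,P^*)$. For $k\ge3$, let $\mu=\min_{i\in C}x_i$ and take a vertex $A_m$ of $C$ with $x_m=\mu$, with cycle-predecessor $A_{m^-}$ and cycle-successor $A_{m^+}$. If some minimal vertex has a minimal predecessor, choose $A_m$ with $x_{m^-}=\mu$; then both cycle edges at $A_m$ carry flux $\kappa\mu$, and replacing the path $A_{m^-}\to A_m\to A_{m^+}$ by the single edge $A_{m^-}\to A_{m^+}$ (flux $\kappa\mu=\kappa x_{m^-}$) produces a balanced simple cycle $C'$ of length $k-1$ with $v^{C'}=v^{C}$, and the induction hypothesis applies. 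Otherwise every minimal vertex, and in particular $A_m$, has $x_{m^-}>\mu$ and $x_{m^+}>\mu$; replacing the same path by the edge $A_{m^-}\to A_{m^+}$ carrying the flux $\kappa x_{m^-}$ again yields a balanced simple cycle $C'$ of length $k-1$, and comparing net inflows gives
\begin{equation*}
v^{C}=v^{C'}+\kappa\,(x_{m^-}-x_m)\,(e_m-e_{m^+}).
\end{equation*}
Here $x_{m^-}-x_m>0$, and $x_m=\mu<x_{m^+}$ makes $e_m-e_{m^+}$ a generator of $\mathbf{Q}^n_{\rm DB}(P,P^*)$; since $v^{C'}\in\mathbf{Q}^n_{\rm DB}(P,P^*)$ by the induction hypothesis and the second summand lies in this cone, so does $v^C$.

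This completes the induction and hence the proof. I expect the only genuinely delicate point to be organising the peeling so that the error term produced at each step is always a \emph{downhill} difference $e_m-e_{m^+}$: this is exactly why the reduction must be performed at a vertex of minimal $x_i$, and it is also why ties among the $x_i$ are absorbed painlessly into the case split (if two minimal vertices are cycle-adjacent we contract, otherwise the successor of a minimal vertex is strictly above $\mu$) rather than needing a separate perturbation argument. A by-product of the induction is an explicit reversible system realising $v^C$, obtained by accumulating the detailed-balance terms generated along the recursion.
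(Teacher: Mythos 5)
Your proof is correct and follows essentially the same route as the paper: decompose any balanced chain into simple cycles sharing the equilibrium $P^*$ (you invoke the classical cycle decomposition of non-negative circulations where the paper derives the same fact from an extreme-ray argument, Lemmas~\ref{Lemma:CycleExistence}--\ref{Lemma:ExtremeCycles}), then peel off a vertex of minimal $p_i/p_i^*$ by induction, absorbing the error into a single downhill step $A_m\rightleftharpoons A_{m^+}$. The one substantive difference is in your favour: your Case~1 handles ties among the minimal ratios, a degenerate situation in which the paper's explicit coefficient $\kappa=\bigl(\frac{p_{k}}{p_{k}^*}-\frac{p_{1}}{p_{1}^*}\bigr)\bigl(\frac{p_{2}}{p_{2}^*}-\frac{p_{1}}{p_{1}^*}\bigr)^{-1}$ is undefined (the successor $A_2$ of the deleted minimal vertex may itself be minimal), so your contraction-without-correction step is a genuine, if small, repair of the induction.
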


This means that for every first order kinetic equation (\ref{MAsterEq0}) with a
given positive equilibrium $P^*$ and for every point $P\neq P^*$ there exists a
first order kinetic equation with detailed balance and  equilibrium $P^*$ that
has the same velocity at $P$. At this point the right hand sides of the kinetic
equations coincide.

Therefore, if we observe the Markov kinetics at one point then we can never
distinguish general systems from systems with detailed balance. In particular,
they have the same set of Lyapunov functions:
\begin{corollary}\label{Corollary1} If for a function
$H(P,P^*)$, $\D H / \D t \leq 0$ for any system (\ref{MAsterEq0}) with
equilibrium $P^*$ and detailed balance then $\D H / \D t \leq 0$ for any system
(\ref{MAsterEq0}) with equilibrium $P^*$.\end{corollary}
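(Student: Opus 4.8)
The inclusion $\mathbf{Q}^n_{\rm DB}(P,P^*)\subseteq\mathbf{Q}^n_{\rm B}(P,P^*)$ is immediate from $\mathcal{Q}^n_{\rm DB}(P^*)\subset\mathcal{Q}^n_{\rm B}(P^*)$, so the whole content is the reverse inclusion: given an arbitrary $(q_{ij})\in\mathcal{Q}^n_{\rm B}(P^*)$, I must produce a $(\tilde q_{ij})\in\mathcal{Q}^n_{\rm DB}(P^*)$ giving the same velocity at $P$. Following the two-step strategy announced in the introduction, the plan is first to decompose the given balanced system into simple cycles, and then to replace each cycle by a detailed-balance system with the same velocity at $P$. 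For the first step I would invoke the (classical) cycle decomposition of Markov chains with a positive stationary distribution: using the form \eqref{MAsterEq1}, the vector field is a non-negative combination $\sum_\gamma c_\gamma V_\gamma$ where each $V_\gamma$ is the vector field of an elementary oriented cycle $A_{i_1}\to A_{i_2}\to\cdots\to A_{i_k}\to A_{i_1}$ whose rate constants are tuned so that $P^*$ is its equilibrium (as in \eqref{cycle}, the constant for the step $A_{i_m}\to A_{i_{m+1}}$ is $\kappa_\gamma/p^*_{i_m}$ for a single $\kappa_\gamma>0$). This is where I expect to lean hardest on prior literature; one must be a little careful that the elementary cycles used have their \emph{own} equilibrium equal to $P^*$, not merely that they sum correctly — but the standard cycle/loop decomposition does deliver exactly this, and the coefficients $c_\gamma$ are non-negative.

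The heart of the argument is then a purely local statement about a single cycle: \emph{for a simple $k$-cycle with equilibrium $P^*$ and any point $P$, the velocity $\D P/\D t$ at $P$ lies in $\mathbf{Q}^n_{\rm DB}(P,P^*)$}. Since the cone $\mathbf{Q}^n_{\rm DB}(P,P^*)$ is closed under non-negative combinations, proving this for each cycle in the decomposition and adding up finishes the theorem. To prove it, write the cycle's velocity in the form \eqref{MAsterEq1}: with $x_i:=p_i/p^*_i$, the contribution to $\D p_{i_m}/\D t$ is $\kappa\,(x_{i_{m-1}}-x_{i_m})$, i.e. a "discrete transport" of the numbers $x_i$ around the cycle. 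A detailed-balance system, by contrast, contributes to $\D p_i/\D t$ a term $\tilde q_{ij}p^*_j(x_j-x_i)$ with the \emph{symmetric} structure $\tilde q_{ij}p^*_j=\tilde q_{ji}p^*_i=:a_{ij}=a_{ji}$, so any detailed-balance velocity has the form $\D p_i/\D t=\sum_{j\neq i}a_{ij}(x_j-x_i)$ with $a_{ij}\ge 0$ symmetric. So the concrete task is: given the cyclic gradient-like vector with components $x_{i_{m-1}}-x_{i_m}$, find non-negative symmetric weights $a_{ij}$ (supported, say, on the chords and edges of the cycle) reproducing it exactly \emph{at this one point} $P$.

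I would handle this by choosing the $a_{ij}$ supported on the edges of the cycle plus possibly the "long chords," and solving the resulting linear system for the fixed data $(x_i)$; the non-negativity is the only real constraint, and it can be arranged by exploiting that the $x_i$ along the cycle have a cyclic order — split the cycle at a vertex where $x$ is maximal (or minimal) and route the required flux along two monotone arcs, assigning to each consecutive symmetric pair $a_{i_m i_{m+1}}$ a value chosen so the telescoping sum matches $x_{i_{m-1}}-x_{i_m}$, which keeps everything non-negative. The cleanest route, and the one I would actually write, is to observe that it suffices to treat a \emph{triangle} (3-cycle) since any $k$-cycle's velocity can itself be written as a non-negative combination of 3-cycle velocities sharing the equilibrium $P^*$ (pick the extreme vertex and peel off triangles), and for $n=3$ one checks directly — a short computation with three symmetric weights $a_{12},a_{23},a_{13}$ — that the two cones $\mathbf{Q}^3_{\rm B}(P,P^*)$ and $\mathbf{Q}^3_{\rm DB}(P,P^*)$ coincide. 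The main obstacle, then, is entirely in bookkeeping the non-negativity when reducing a long cycle to triangles (equivalently, routing the cyclic flux), and I expect the extreme-vertex argument to dispatch it cleanly; everything else is the cycle decomposition (cited) and linear algebra.
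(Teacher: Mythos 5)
Your proposal is, in substance, the paper's own argument: Corollary~\ref{Corollary1} is obtained there as an immediate consequence of Theorem~\ref{Theorem1}, whose proof is exactly your two steps --- decompose a balanced chain into simple cycles with the same equilibrium (the paper derives this from the extreme rays of $\mathcal{Q}^n_{\rm B}(P^*)$ in Lemma~\ref{Lemma:ExtremeCycles} rather than citing the cycle-decomposition literature), and then reduce each cycle's velocity at $P$ to a conic combination of 2-cycle (detailed-balance) velocities by peeling off the vertex where $p_i/p_i^*$ is extremal, which is precisely what guarantees the non-negativity of the coefficient. Two loose ends are worth tying up. First, since the statement concerns $H$ and not cones, you need the one-line bridge that $\D H/\D t$ at $P$ equals $\nabla_P H\cdot (\D P/\D t)$ and hence depends on the system only through its velocity at $P$; the cone equality then transfers the inequality from the detailed-balance class to the general class. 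Second, of your two suggested implementations of the local step, the direct one (the induction $\mathbf{v}_k=\mathbf{v}_{k-1}+\mathbf{v}_2$ at the minimal vertex, which is the paper's route) is the one that works cleanly; the ``reduce to triangles'' shortcut you say you would actually write is not justified as stated, because peeling a $k$-cycle into a 3-cycle plus a $(k-1)$-cycle at the level of rate constants (\ref{CycleRates}) does not reproduce the velocity --- the shared vertex picks up spurious terms --- so that route would require the same extremal-vertex bookkeeping you are trying to avoid, plus an unproved claim that $k$-cycle velocities lie in the cone generated by 3-cycle velocities. Also be aware of the degenerate case where consecutive ratios $p_i/p_i^*$ coincide at the minimal vertex, which your telescoping (like the paper's explicit formula for $\kappa$) must handle separately or by a limiting argument.
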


The system with detailed balance has $n(n-1)/2$ dimensions available to match a single
$n$-dimensional velocity vector. Therefore, it is not surprising that the cones
$\mathbf{Q}^n_{\rm DB}(P,P^*)$ has non-empty interior ({\em solid cones}). But this is
not enough to cover any $n$-dimensional velocity vector using {\em non-negative}
coefficients $q_{ij}$. This non-negativity condition defines the borders of the cones and
we can a priori just state the inclusion $\mathbf{Q}^n_{\rm DB}(P,P^*) \subseteq
\mathbf{Q}^n_{\rm B}(P,P^*)$. The calculation of dimension does not give a hint about
coincidence of these cones.

The proof of Theorem 1 is constructed in two steps. First, we prove that for
every $P^*$ and $P$ the cone of possible velocities $\mathbf{Q}^n_{\rm
B}(P,P^*)$ is the convex hull of the velocities at point $P$ of the simple
cyclic schemes, $A_{i_1} \to \ldots \to A_{i_k} \to A_{i_1}$ ($k\leq n$ and all
the numbers $i_1, \ldots, i_k$ are different), with the same equilibrium $P^*$.
Secondly, we prove that it is sufficient to take $k=2$.

We will characterize $\mathbf{Q}^n_{\rm B}(P,P^*)$ by its extreme rays. A ray
with direction vector $x\neq 0$ is a set $\{\lambda x\}$ ($\lambda \geq 0$).
$l$ is an extreme ray of a cone $\mathbf{Q}$ if for any $u \in l$ and any $x,y
\in \mathbf{Q}$, whenever $u = (x + y)/2$, we must have $x,y\in l$. If a closed
convex cone does not include a whole straight line then it is the convex hull
of its extreme rays \cite{Rockafellar1970}.

\begin{lemma}\label{Lemma:PointCone}
The cone $\mathbf{Q}^n_{\rm B}(P,P^*)$ does not include a whole straight
line. \end{lemma}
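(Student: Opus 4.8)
The plan is to exhibit a strict Lyapunov function whose time derivative is negative on the entire cone $\mathbf{Q}^n_{\rm B}(P,P^*)$ except at the origin; this forbids the cone from containing a line through $0$. The natural candidate is already in hand: the quadratic divergence $H_2(P\|P^*)=\sum_i (p_i-p^*_i)^2/p^*_i$, for which formula~(\ref{QuadLyap}) gives
\[
\frac{\D H_2}{\D t} = -\sum_{i,j,\,j\neq i} q_{ij}p^*_j\left(\frac{p_i}{p^*_i}-\frac{p_j}{p^*_j}\right)^2 \le 0
\]
for every $(q_{ij})\in\mathcal{Q}^n_{\rm B}(P^*)$. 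The key observation is that $\D H_2/\D t$, evaluated at the fixed point $P$, is a \emph{linear} functional of the velocity vector $v=\D P/\D t$: indeed $\D H_2/\D t = \sum_i \frac{2(p_i-p^*_i)}{p^*_i} v_i =: \ell(v)$, the gradient of $H_2$ at $P$ paired with $v$. So $\ell$ is a fixed linear form, nonpositive on all of $\mathbf{Q}^n_{\rm B}(P,P^*)$.

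Now suppose, for contradiction, that the cone contains a whole line $\{\lambda x : \lambda\in\mathbb{R}\}$ with $x\neq 0$. Then both $x$ and $-x$ lie in the cone, so $\ell(x)\le 0$ and $\ell(-x)\le 0$, forcing $\ell(x)=0$. Thus the line lies in the kernel of $\ell$, i.e. along $x$ the derivative $\D H_2/\D t$ vanishes. I would then argue that this is impossible unless $x=0$: if $v\neq 0$ is an admissible velocity at $P$ for some $(q_{ij})\in\mathcal{Q}^n_{\rm B}(P^*)$ realizing it, then by the remark following~(\ref{QuadLyap}) the derivative is strictly negative whenever $P$ is not the equilibrium of that system and $v\neq 0$ — more directly, $\D H_2/\D t = 0$ together with the sum-of-squares form~(\ref{QuadLyap}) forces $q_{ij}(p_i/p^*_i - p_j/p^*_j)^2 = 0$ for all $i\neq j$, hence $q_{ij}=0$ whenever $p_i/p^*_i\neq p_j/p^*_j$, which makes the right-hand side of~(\ref{MAsterEq1}) vanish, i.e. $v=0$. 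This contradicts $x\neq 0$, completing the argument.

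The only subtlety — the step I expect to need the most care — is the case where $p_i/p^*_i = p_j/p^*_j$ for many pairs $(i,j)$, so that nonzero $q_{ij}$ on those pairs do not show up in~(\ref{QuadLyap}); one must check that such transitions also contribute zero to the velocity~(\ref{MAsterEq1}), which is immediate since the factor $(p_j/p^*_j - p_i/p^*_i)$ multiplying $q_{ij}p^*_j$ is exactly zero there. Hence every admissible velocity with $\D H_2/\D t = 0$ is genuinely the zero vector, and the cone contains no line.
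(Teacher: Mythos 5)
Your proposal is correct and follows essentially the same route as the paper: both use the quadratic divergence $H_2(P\|P^*)$ and formula~(\ref{QuadLyap}) to show that any nonzero admissible velocity gives a strictly negative derivative of $H_2$, so $v$ and $-v$ cannot both lie in the cone. You merely spell out the strictness step (vanishing of the sum of squares forces each term of~(\ref{MAsterEq1}) to vanish) that the paper leaves implicit.
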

\begin{proof}If $v\neq 0$ is a possible value of the right hand side of
(\ref{MAsterEq1}) then the derivative of $H_2(P\|P^*)$ in direction $v$ is
strictly negative (\ref{QuadLyap}). Therefore, it is impossible that both $v$
and $-v$ belong to $\mathbf{Q}^n_{\rm B}(P,P^*)$.
\end{proof}

Let us consider a simple cyclic scheme, $A_{i_1} \to \ldots \to A_{i_k} \to
A_{i_1}$ ($k\leq n$ and all the numbers $i_1, \ldots, i_k$ are different). For
a given positive equilibrium, $P^*$ the coefficients for this scheme belong to
a ray:
\begin{equation}\label{CycleRates}
q_{i_{j+1} \, i_{j} }= \frac{\kappa}{p_{i_j}^*} \; (j=1, \ldots , k) \, ,
\end{equation}
where $\kappa \geq 0$ is a constant and we use the standard convention that for
a cycle $q_{i_{k+1} \, i_{k} }=q_{i_{1} \, i_{k} }$.

\begin{lemma}\label{Lemma:CycleExistence}
If system (\ref{MAsterEq0}) has a positive equilibrium $P^*$ then for every
$A_i$ either all $q_{ji}=q_{ij}=0$ or the state $A_i$ belongs to a cycle with
strictly positive rate constants.
\end{lemma}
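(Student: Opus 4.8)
The plan is to translate the balance condition into a statement about conservative flows (circulations) on a finite directed graph and then apply the classical decomposition of a circulation into simple directed cycles. First I would form the directed graph $G$ on the vertex set $\{1,\dots,n\}$ that carries the edge $j\to i$ precisely when $q_{ij}>0$, and on each such edge put the weight $f_{ij}=q_{ij}p^*_j$. Since $P^*$ is a positive equilibrium we have $p^*_j>0$, so $f_{ij}>0$ iff $q_{ij}>0$; thus $G$ is exactly the ``positive-rate graph'' and $f$ is strictly positive on all of its edges. Rewriting the balance condition (\ref{MasterEquilibrium}) as $\sum_{j\neq i} f_{ij}=\sum_{j\neq i} f_{ji}$ for every $i$ shows that $f$ is a nonnegative circulation on $G$: at every vertex the total incoming weight equals the total outgoing weight.

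Next I would invoke the standard fact that a nonnegative circulation is a nonnegative combination of indicator flows of simple directed cycles. The argument is the usual greedy one: if $f\not\equiv 0$, start from any edge of positive weight and keep following edges of positive weight; conservation guarantees that every vertex with positive incoming weight also has positive outgoing weight, so the walk never gets stuck, and finiteness of $G$ forces it to close up into a simple directed cycle $C$ all of whose edges carry positive weight. Subtracting $\delta\,\mathbf{1}_C$ with $\delta=\min_{e\in C}f(e)>0$ yields a nonnegative circulation with strictly fewer positive edges, and induction on the number of positive edges completes the decomposition. The consequence I actually need is the weaker statement that \emph{every} edge of $G$ lies on some simple directed cycle all of whose edges belong to $G$, i.e.\ have strictly positive rate constants.

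Then the lemma follows directly. Fix $A_i$ and suppose it is not the case that all $q_{ji}=q_{ij}=0$; then $G$ has an edge incident to the vertex $i$ (an edge $j\to i$ if some $q_{ij}>0$, or an edge $i\to j$ if some $q_{ji}>0$). By the previous step this edge lies on a simple directed cycle $C$ contained in $G$, and since $i$ is an endpoint of an edge of $C$ the vertex $i$ lies on $C$; all rate constants along $C$ are positive because $C\subseteq G$. This is precisely a cycle $A_{i_1}\to\cdots\to A_{i_k}\to A_{i_1}$ with distinct indices, $i$ among them, and strictly positive constants, as claimed.

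The one point that needs care — and where the balance condition is genuinely used rather than merely decorative — is the passage from ``every non-isolated state has both a positive incoming and a positive outgoing transition'' to ``every non-isolated state lies on a positive cycle''. The first property by itself does not force a cycle through a prescribed vertex (a finite digraph with all out-degrees $\geq 1$ has cycles, but not necessarily through a given vertex); it is the global conservation/circulation structure, exploited through the cycle decomposition above (equivalently: the positive-rate graph decomposes into strongly connected components with no edges running between distinct components), that upgrades it to the statement of the lemma. I expect this to be the only nontrivial step; everything else is bookkeeping.
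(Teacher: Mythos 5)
Your proof is correct, but it takes a genuinely different route from the paper's. The paper argues by contradiction via reachability: assuming $A_i$ lies on no cycle with positive constants, the set $\mathcal{A}_{i\downarrow}$ of states reachable from $A_i$ and the set $\mathcal{A}_{i\uparrow}$ of states from which $A_i$ is reachable are disjoint, and the balance condition then shows that a strictly one-way flow out of $A_i$ into the downstream-closed set (or into $A_i$ from the upstream-closed set) would force $p_i^*$ (respectively the upstream $p_j^*$) to vanish; positivity of $P^*$ forces both sets to be empty, i.e.\ $A_i$ is isolated. You instead prove the full decomposition of the equilibrium flux circulation $f_{ij}=q_{ij}p_j^*$ into positively weighted simple directed cycles and read the lemma off as the corollary that every positive edge lies on a positive cycle. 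Both arguments exploit the same conservation structure, but yours is heavier for this lemma alone while doing more work up front: the circulation decomposition you establish is essentially the content of the paper's Lemma~\ref{Lemma:ExtremeCycles} (extreme rays of $\mathcal{Q}^n_{\rm B}(P^*)$ are simple cycles with rates $\kappa/p_{i_j}^*$), which the paper derives \emph{afterwards} from Lemma~\ref{Lemma:CycleExistence} by a perturbation argument on extreme rays. Your closing observation correctly isolates the crux — positive in- and out-degree at a vertex does not by itself place that vertex on a cycle, and it is the global balance (circulation) structure that upgrades the local statement — so the logic is sound; only the order of battle differs from the paper.
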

\begin{proof}
Let $A_i$ not belong to a cycle with positive constants. We say that a state
$A_j$ is reachable from a state $A_k$ if there exists a non-empty chain of
transitions with non-zero coefficients which starts at $A_k$ and ends at $A_j$:
$A_k \to \ldots \to A_j$. Let $\mathcal{A}_{i \downarrow}$ be the set of states
reachable from $A_i$ and $\mathcal{A}_{i \uparrow}$ be the set of states $A_i$
is reachable from. $\mathcal{A}_{i \downarrow} \cap \mathcal{A}_{i \uparrow} =
\emptyset$ because  $A_i$ does not belong to a cycle. If $\mathcal{A}_{i
\uparrow}$ is not empty then in equilibrium all the corresponding $p_j^*=0$ ($j
\in \mathcal{A}_{i \downarrow}$ because there is flow from $\mathcal{A}_{i
\uparrow}$ to $A_i$ and no flow back). If $\mathcal{A}_{i \downarrow}$ is not
empty then in equilibrium $p^*_i=0$ because there is a flow from $A_i$ to
$\mathcal{A}_{i \downarrow}$ and no flow back. Therefore, if the equilibrium is
strictly positive and $A_i$ does not belong to a cycle then $\mathcal{A}_{i
\uparrow} = \mathcal{A}_{i \downarrow}=\emptyset$, hence, all
$q_{ji}=q_{ij}=0$.
\end{proof}

We will use the following simple general statement: Let $\mathcal{Q}$ be a cone
in $\mathbb{R}^m$ without straight lines, $L$ be a linear map, $L:\mathbb{R}^m
\to \mathbb{R}^k$, and $\mathbf{Q}=L(\mathcal{Q})$ be a cone in $\mathbb{R}^k$
without straight lines. Then for every extreme ray $V \subset \mathbf{Q}$ there
exists an extreme ray $W \subset \mathcal{Q}$ such that $L(W)=V$. (In other
words, there always exists an extreme ray in the preimage of an extreme ray.)
We will apply this statement to $\mathcal{Q}=\mathcal{Q}^n_{\rm B}(P^*)$ (the
cone of all Markov processes with the given equilibrium $P^*$) and
$\mathbf{Q}=\mathbf{Q}^n_B{(P,P^*)}$ (the cone of the possible velocities at
point $P$ for all Markov processes with the given equilibrium). The map $L$
transforms the right hand side of the Kolmogorov equation (\ref{MAsterEq0})
into its value at point $P$. This transformation ``vector field $\mapsto$ its
value at point $P$" is, obviously, a linear map.

\begin{lemma}\label{Lemma:ExtremeCycles}Any extreme ray of the cone $\mathcal{Q}^n_{\rm
B}(P^*)$ is a simple cycle with constants (\ref{CycleRates}).
\end{lemma}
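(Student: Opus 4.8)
The plan is to characterize the extreme rays of $\mathcal{Q}^n_{\rm B}(P^*)$ directly as a cone cut out by linear equalities and inequalities. First I would set up coordinates: by the discussion preceding the lemma, $\mathcal{Q}^n_{\rm B}(P^*)$ sits inside the subspace $L^*$ of dimension $(n-1)^2$ defined by the balance conditions (\ref{MasterEquilibrium}), and it is the intersection of $L^*$ with the non-negative orthant $\mathbb{R}_+^{n(n-1)}$. Since by Lemma \ref{Lemma:PointCone} (applied with $P$ replaced by any non-equilibrium state, or directly since $Q=1/p_j^*$ lies in the interior of the orthant) this cone contains no straight line, it is the convex hull of its extreme rays. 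An extreme ray of a polyhedral cone $\{x \in L^* : x \ge 0\}$ is spanned by a vector $x$ for which the active constraints (the equalities defining $L^*$ together with the vanishing coordinates $x_{ij}=0$) have rank $n(n-1)-1$; equivalently, the support of $x$ — the set $S$ of pairs $(i,j)$ with $q_{ij}>0$ — is minimal among supports of non-zero vectors in the cone.

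Next I would analyze what such a minimal support $S$ can look like. Think of $S$ as a directed graph $G$ on the vertices $A_1,\dots,A_n$, with an edge $j\to i$ whenever $q_{ij}>0$. The key structural fact comes from Lemma \ref{Lemma:CycleExistence}: if $(q_{ij})\in\mathcal{Q}^n_{\rm B}(P^*)$ then every vertex touched by an edge of $G$ lies on a directed cycle inside $G$ (since deleting the vertices and edges not on the cycle preserves the balance condition for the remaining restricted system — one checks the balance equations localize). Hence $G$ is a union of directed cycles. If $G$ contained two distinct simple cycles, say $C_1$ and $C_2$, then I would produce two linearly independent non-negative solutions supported inside $S$: namely, put rate constants $\kappa_1/p_{i}^*$ along $C_1$ (and zero elsewhere) and $\kappa_2/p_i^*$ along $C_2$; each separately satisfies (\ref{MasterEquilibrium}) by the computation in the paragraph around (\ref{CycleRates}), and they are not proportional because their supports differ. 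Their average is again in the cone with support $\subseteq S$, contradicting minimality of $S$. Therefore a minimal support is exactly the edge set of a single simple cycle.

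Finally, I would pin down the direction vector on that single cycle $A_{i_1}\to\cdots\to A_{i_k}\to A_{i_1}$: the balance condition (\ref{MasterEquilibrium}) at vertex $A_{i_{j+1}}$ reads $q_{i_{j+1}\,i_j}\,p_{i_j}^* = q_{i_{j+1}\,i_j}\,p_{i_{j+1}}^*$ — wait, more precisely it reduces, since each vertex on the cycle has exactly one incoming and one outgoing edge, to $q_{i_{j+1}\,i_j}\,p^*_{i_j} = q_{i_{j+2}\,i_{j+1}}\,p^*_{i_{j+1}}$ for all $j$, so the common value $q_{i_{j+1}\,i_j}\,p^*_{i_j}=\kappa$ is constant around the cycle. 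This gives exactly (\ref{CycleRates}), so the ray is one-dimensional and is the claimed simple-cycle ray. Conversely every such cycle ray is extreme, because its support is a simple cycle, which is minimal (removing any edge kills the balance condition). The main obstacle I anticipate is the rigorous version of the ``balance localizes to the cycle'' step — i.e., showing carefully that restricting a balanced system to the vertex set of a cycle contained in its support again yields a balanced system, so that Lemma \ref{Lemma:CycleExistence} can be invoked — and arguing that the minimal-support vectors are precisely the generators of extreme rays; both are standard polyhedral-geometry facts but need to be stated cleanly.
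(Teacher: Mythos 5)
Your proposal is correct and rests on the same two essential ingredients as the paper's proof --- Lemma~\ref{Lemma:CycleExistence} to locate a simple cycle inside the support of an extreme vector, and the observation that the cycle vector with constants (\ref{CycleRates}) lies in $\mathcal{Q}^n_{\rm B}(P^*)$ --- but you package the extremality step differently. The paper argues directly from the definition of an extreme ray: given an extreme $Q$, it forms the cycle vector $Q_\kappa$ supported inside the support of $Q$, notes that $Q\pm Q_\kappa$ remain in the cone for small $\kappa>0$, and concludes $Q\propto Q_\kappa$ in one stroke from $Q=\frac{1}{2}\left((Q+Q_\kappa)+(Q-Q_\kappa)\right)$. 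You instead invoke the standard characterization of the extreme rays of a cone of the form $L^*\cap\mathbb{R}_+^{n(n-1)}$ as the non-zero vectors of minimal support, pin down the support combinatorially, and then recover the values from the balance equations on a single cycle. Both routes work; the paper's is shorter and self-contained, while yours makes the polyhedral structure explicit and gives the converse (every simple-cycle ray is extreme) as a by-product. Two cleanup remarks. First, your intermediate claim that the support graph is a \emph{union} of directed cycles, justified by ``the balance equations localize to the cycle,'' is both unnecessary and unsound as stated: restricting a balanced system to a vertex subset does not in general preserve balance (the correct statement here is the flow decomposition of the circulation $w_{ij}=q_{ij}p^*_j$). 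All you actually need is that the support \emph{contains} one simple cycle with positive constants, which is precisely Lemma~\ref{Lemma:CycleExistence}. Second, the ``two distinct cycles, take their average'' contradiction is misphrased: the clean argument is that a single cycle $C_1$ contained in the support $S$ already carries a non-zero cone vector supported on $C_1$, so minimality of $S$ forces $S=C_1$ outright, after which the balance condition around the cycle yields (\ref{CycleRates}).
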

\begin{proof}Let a non-zero Markov chain $Q$ with
coefficients $q_{ij}$ belong to an extreme ray of $\mathcal{Q}^n_{\rm B}(P^*)$.
Due to Lemma~\ref{Lemma:CycleExistence} this chain includes a simple cycle with
non-zero coefficients, $A_{i_1} \to \ldots \to A_{i_k} \to A_{i_1}$ ($k\leq n$,
all the numbers $i_1, \ldots, i_k$ are different, $q_{i_{j+1}\,i_{j}}>0$ for
$j=1,\ldots, k$, and $i_{k+1}=i_1$). For sufficiently small $\kappa$
($0<\kappa<\kappa_0$), $q_{i_{j+1}\,i_{j}}-\frac{\kappa}{p^*_{i_{j}}}>0$
($j=1,\ldots, k$). Let $Q_{\kappa}$ be the same simple cycle with the
coefficients (\ref{CycleRates}). Then for $0<\kappa <\kappa_0$ vectors $Q\pm
Q_{\kappa}$ also represent  Markov chains with the equilibrium $P^*$.
Obviously, $Q=\frac{(Q+Q_{\kappa})+(Q-Q_{\kappa})}{2}$, hence,  $Q$ should be
proportional to $Q_{\kappa}$.
\end{proof}

Due to this Lemma, every Markov chain with positive equilibrium is a convex combination
of several simple cycles with the same equilibrium. This is the {\em global
decomposition} of a Markov chain into simple cycles. ``Global" here means that the same
decomposition is valid for all distributions.

{\em Now, we are in position to prove Theorem 1.}
\begin{proof}
We will prove that any extreme ray of the cone $\mathbf{Q}^n_{\rm B}(P,P^*)$ corresponds
to a simple cycle of length 2 (a {\em step}): $A_i \rightleftharpoons A_j$ with the rate
constants (\ref{CycleRates}) $q_{ij} = \frac{\kappa}{p^*_j}\, , \; q_{ji} =
\frac{\kappa}{p^*_ i}$.

According to Lemma~\ref{Lemma:ExtremeCycles}, it is sufficient to prove that
for any simple cycle with equilibrium $P^*$ and rate constants
(\ref{CycleRates}) and for any distribution $P$ the right hand side of the
Kolmogorov equation (\ref{MAsterEq0}) is a conic combination (a combination
with non-negative real coefficients) of the right hand sides of this equation
for simple cycles of length 2 at the same point $P$.

Let us prove this by induction on the cycle length $k$. For $k=2$ it is true
(trivially). For a cycle of length $k>2$, $A_1 \to A_2 \to \ldots A_k \to A_1$,
with the rate constants given by (\ref{CycleRates}), the right hand side of
equation (\ref{MAsterEq0}) is the vector $\mathbf{v}_k$ with coordinates
\begin{equation}\label{cycleVelocity} (\mathbf{v}_k)_j=\frac{p_{j-1}}{p^*_{j-1}}
-\frac{p_{j}}{p^*_{j}}
\end{equation}
Here, without loss of generality, we take $\kappa=1$, use index $j$ instead of
$i_j$ and apply the standard convention regarding cyclic order. Other
coordinates of $\mathbf{v}_k$ are zeros.

Let us decompose this $\mathbf{v}_k$ into a conic combination of a vector
$\mathbf{v}_{k-1}$ for a cycle of length $k-1$ and a vector $\mathbf{v}_2$ for
a cycle of length 2. The flux $A_{j}\to A_{j+1}$ is ${p_{j}}/{p^*_{j}}$. Let us
find the minimum value of this flux and, for convenience, let us put this
minimal flux in the first position by a cyclic permutation. The target cycle of
length $k-1$ is $A_{2} \to \ldots A_{k} \to A_{2}$ with rate constants given by
formula (\ref{CycleRates}) ($\kappa=1$). We just delete the vertex with the
smallest flux from the initial cycle of length $k$. The target cycle of length
2 is $A_1\rightleftharpoons A_2$ with the rate constants (\ref{CycleRates})
$q_{21} = \frac{\kappa}{p^*_1}\, , \; q_{12} = \frac{\kappa}{p^*_ 2}$. We find
the constant $\kappa$ from the conditions:
$\mathbf{v}_k=\mathbf{v}_{k-1}+\mathbf{v}_2$ at the point $P$, hence, two
following reaction schemes, (a) and (b), should have the same velocities, $\D
P/\D t$:

$$\mbox{(a)  }A_k { \overset{1/p_k^*}{\rightarrow}}A_1{ \overset{1/p_1^*}{\rightarrow}}A_2 \mbox{  and  (b)  }
A_k { \overset{1/p_k^*}{\rightarrow}}A_2 \, ; A_1
\underset{\kappa/p_2^*}{\overset{\kappa/p_1^*}{\rightleftharpoons}}A_2 \, .$$
From this condition,
$$\kappa=\left(\frac{p_{k}}{p_{k}^*}-\frac{p_{1}}{p_{1}^*}\right)
\left(\frac{p_{2}}{p_{2}^*}-\frac{p_{1}}{p_{1}^*}\right)^{-1}$$ $\kappa \geq 0$
because ${p_{1}}/{p^*_{1}}$ is the minimal value of ${p_{j}}/{p^*_{j}}$.
Finally,  $\mathbf{v}_k=\mathbf{v}_{k-1}+\mathbf{v}_2$.
\end{proof}

Further, we omit the index B or DB at the cone: $\mathbf{Q}^n_{\rm
B}(P,P^*)=\mathbf{Q}^n_{\rm DB}(P,P^*)=\mathbf{Q}^n (P,P^*)$.

It is necessary to stress that the decomposition of the right hand side of the Kolmogorov
equation (\ref{MAsterEq0}) into a conic combination of cycles of length 2 depends on the
ordering of the ratios $p_i/p_i^*$ and cannot be performed for all values of $P$
simultaneously. Thus, this decomposition is {\em local}.

\subsection{Quasichemical representation and the cones of possible velocities \label{Quasichem}}

For systems with detailed balance, the cone of possible velocities, $\mathbf{Q}^n_{\rm
DB}(P,P^*)$, is a polyhedral cone. For a given $P^*$, it is a piecewise constant function
of $P$. The hyperplanes of the equilibria $A_i\leftrightarrow A_j$ divide the standard
simplex of distributions into a finite number of polyhedra ({\em compartments}). In each
compartment the dominant direction of every transition $A_i\leftrightarrow A_j$ is fixed
and the cone of possible velocities is constant. Now we find that this construction provides the cones
of  possible velocities for general Markov kinetics and not only for systems with detailed balance. Let us describe these cones in detail.

The construction of cones of possible velocities was described in 1979 \cite{Gorban1979}
for systems with detailed balance in the general setting for generalized MAL, for
nonlinear chemical kinetics. These systems are represented by stoichiometric equations of
the elementary reaction coupled with the reverse reactions:
\begin{equation}\label{Stoichiometric}
\alpha_{\rho 1}A_1+\ldots + \alpha_{\rho n}A_n \rightleftharpoons \beta_{\rho 1}A_1+\ldots
+ \beta_{\rho n}A_n\, ,
\end{equation}
where $\alpha_{\rho i}, \, \beta_{\rho i} \geq 0$ are the stoichiometric
coefficient, $\rho $ is the reaction number ($\rho =1, \ldots, m$). The
stoichiometric vector of the $\rho $th reaction is an $n$ dimensional vector
$\gamma_\rho $ with coordinates $\gamma_{\rho i} = \beta_{\rho i}- \alpha_{\rho
i}$.

The equilibria of the $\rho $th pair of reactions (\ref{Stoichiometric}) form a
surface in the space of concentrations. The intersection of these surfaces for
all $\rho $ is the equilibrium (with detailed balance). These surfaces of the
equilibria of the pairs of elementary reactions (\ref{Stoichiometric}) divide
the space of concentrations into several compartments. In each compartment the
dominant direction of each reaction (\ref{Stoichiometric}) is fixed and, hence,
the cone of possible velocities is also constant. It is a piecewise constant
function of concentrations (for a given temperature): $$\mathbf{Q}={\rm
cone}\{\gamma_{\rho } {\rm sign}(w_{\rho }) \, | \, \rho =1, \ldots , m\}\, .$$

For example, let us join the transitions $A_i \rightleftharpoons A_j$ in pairs (say,
$i>j$) and introduce the {\em stoichiometric vectors} $\gamma^{ij}$ with coordinates:
\begin{equation}
\gamma^{ij}_k=\left\{\begin{array}{ll}
-1 &\mbox{ if } k=i,\\
1 &\mbox{ if } k=j,\\
0 &\mbox{ otherwise}.
\end{array}
\right.
\end{equation}
Let us rewrite the Kolmogorov equation for the Markov process with detailed balance
(\ref{detBal}) in the quasichemical form:
\begin{equation}\label{QuasiChemKol}
\frac{\D P}{\D t}=\sum_{i>j}w_{ij}^*\left(\frac{p_j}{p_j^*}-\frac{p_i}{p_i^*}\right) \gamma^{ji}\, .
\end{equation}
Here, $w_{ij}^*=q_{ij}p^*_j=q_{ji}p^*_i$ is the equilibrium flux from $A_i$ to $A_j$ and
reverse.

The cone of possible velocities for (\ref{QuasiChemKol}) is
\begin{equation}
\mathbf{Q}^n (P,P^*)={\rm cone}\left\{\gamma^{ji}
{\rm sign}\left(\frac{p_j}{p_j^*}-\frac{p_i}{p_i^*}\right) \, \Big| \, i>j \right\}\, .
\end{equation}
Here, we use the three-valued sign function (with values $\pm 1$ and 0). In
Fig.~\ref{3stateCones}, the partition of the standard distribution simplex into
compartments, and the cones (angles) of possible velocities are presented for
the Markov chains with three states.

\begin{figure}
\centering{\includegraphics[width=0.5\textwidth]{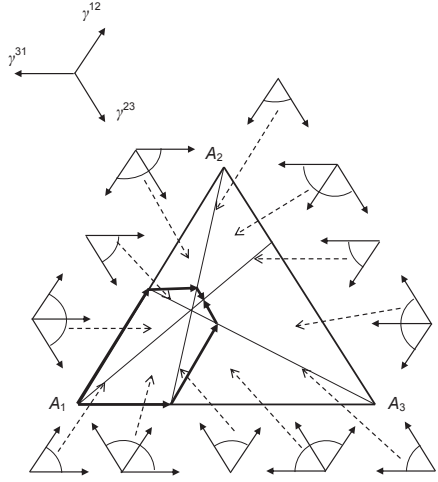}
\caption{\label{3stateCones}Partition of the distribution triangle for the
Markov chains with three states by the lines
$\frac{p_j}{p_j^*}-\frac{p_i}{p_i^*}=0$ into twelve compartments. The
corresponding cones (angles) of possible velocities are presented. The
clockwise and anticlockwise borders of the trajectories are represented by bold
lines.}}
\end{figure}

A set of distributions $U$ is {\em positively invariant} with respect to system
(\ref{MAsterEq0}) if for any initial distribution $P(0)\in U$, the solution of
(\ref{MAsterEq0}) $P(t)$ remains in $U$ for $t>0$. The bold broken lines in
Fig.~\ref{3stateCones} follow along the extreme rays of the angles of possible
velocities (clockwise or anticlockwise). They form the borders of a
positively-invariant area for all the Markov chains with the given equilibrium
$P^*$.

These borders give, for example, a simple estimate of the logarithmic decrement
for Markov chains. For decaying oscillations, the logarithmic decrement is the
natural logarithm of the ratio of any two successive amplitudes: $\delta
\triangleq \ln\frac{x_1}{x_2}$. For a complex eigenvalue $\lambda$, the period
between two amplitudes $T=2\pi/|\Im \lambda|$ and $\delta = 2\pi \frac{|\Re
\lambda|}{|\Im \lambda|}$. For systems with detailed balance, eigenvalues are
always real but for the general Markov chains they may be complex. For example,
for the simple cycle $A_1\to A_2 \to A_3 \to A_1$ with the equilibrium
equidistribution $p_{1,2,3}^*=1/3$, and the rate coefficients $\kappa$, the
nonzero eigenvalues of the linear system (\ref{MAsterEq0}) are $\lambda=\kappa
(-\frac{3}{2}\pm i\frac{\sqrt{3}}{2})$ and $\delta = 2\pi \sqrt{3}$.

Let us follow the clockwise border trajectory (Fig.~\ref{3stateCones}) starting
from the state $A_1$ (the corresponding distribution is $P=(1,0,0)$). This
state belongs to the line of equilibria of the transition $A_2
\rightleftharpoons A_3$. The first step is the equilibration of the transition
$A_1 \rightleftharpoons A_2$ ($A_3$ does not change). After that, the
equilibration of the transition $A_1 \rightleftharpoons A_3$ follows ($A_2$
does not change):
\begin{equation}
\begin{split}
&(1,0,0) \mapsto
\left(\frac{p^*_1}{1-p^*_3},\frac{p^*_2}{1-p^*_3},0\right) \\
 &\mapsto \left(\frac{(p^*_1)^2}{(1-p^*_3)(1-p^*_2)},\frac{p^*_2}{1-p^*_3},\frac{p^*_1
p^*_3 }{(1-p^*_3)(1-p^*_2)}\right)\, .
\end{split}
\end{equation}
As the result of this sequence of equilibrations, when the clockwise border
line  again approaches the equilibrium line of the transition $A_2
\rightleftharpoons A_3$, the value of $p_1$ is
$\frac{(p^*_1)^2}{(1-p^*_3)(1-p^*_2)}$. After this turn in angle $\pi$ every
trajectory becomes closer to $P^*$. The contraction coefficient is
$\frac{p^*_1p^*_2 p^*_3}{(1-p^*_1)(1-p^*_2)(1-p^*_3)}$ or less. The
anticlockwise trajectory gives the same contraction. We estimated the
logarithmic decrement from below:
\begin{equation}\label{DecremEval}
\delta \left(=2\pi \frac{|\Re \lambda|}{|\Im \lambda|}\right) \geq 2\ln
\left(\frac{(1-p^*_1)(1-p^*_2)(1-p^*_3)}{p^*_1p^*_2 p^*_3}\right) \, .
\end{equation}

\subsection{Two $H$-theorems \label{H-theorems}}

The most general form of the $H$-theorem for Markov processes  was proposed by
Morimoto \cite{Morimoto1963}. He used the following $H$-functions: for each
convex function of the positive convex variable $h(x)$ the $h$-{\em divergence}
between distributions $P$ and $P^*$ is
\begin{equation}\label{Morimoto}
\begin{split}
&H_h(P \| P^*)=\sum_i p^*_i h\left(\frac{p_i}{p_i^*}\right)\, .
\end{split}
\end{equation}
At the same time these divergencies were studied by Csisz\'ar
\cite{Csiszar1963} and sometimes they are called the Csisz\'ar--Morimoto
divergences. These functions were introduced two years earlier by R\'enyi on
the last page of his famous work \cite{Renyi1961} together with the hint about
the $H$-theorem. For more details see \cite{GorbanGorbanJudge2010}.

The time derivative of the Csisz\'ar--Morimoto function $H_h(P \| P^*)$
(\ref{Morimoto}) with respect to master equation (\ref{MAsterEq1}) for a
general Markov process is
\begin{equation}\label{ENtropyProd}
\begin{split}
&\frac{\D H_h(P \| P^*)}{\D t}=\sum_{i,j, \, j\neq i} q_{ij}p^*_j \\
&\times \left[h\left(\frac{p_i}{p_i^*}\right)-h\left(\frac{p_j}{p_j^*}\right) +
h'\left(\frac{p_i}{p_i^*}\right)\left(\frac{p_j}{p_j^*}-
\frac{p_i}{p_i^*}\right)\right] \leq 0
\end{split}
\end{equation}
For a Markov process with detailed balance we use the quasichemical form of
master equation (\ref{QuasiChemKol}) and find immediately
\begin{equation}\label{ENtropyProdDB}
\begin{split}
&\frac{\D H_h(P \| P^*)}{\D t}=-\sum_{i,j, \, i>j} q_{ij}p^*_j \\
&\times\left(\frac{p_j}{p_j^*}-\frac{p_i}{p_i^*}\right)\left(h'\left(\frac{p_j}{p_j^*}\right)-h'\left(\frac{p_i}{p_i^*}\right)\right)
\leq 0 \, .
\end{split}
\end{equation}
The inequality for the general Markov processes (\ref{ENtropyProd}) follows
from Jensen's inequality in the differential form, $h'(x)(y-x)\leq h(y)-h(x)$.
It is valid for left and right limits of $h'$ at any point $x >0$. The
inequality for systems with detailed balance (\ref{ENtropyProdDB}) follows from
the monotonicity of $h'$. In full agreement with Corollary~\ref{Corollary1},
the divergences $H_h(P \| P^*)$ (\ref{Morimoto}) are Lyapunov functions for
systems with detailed balance and for all the Markov processes as well.
Theorem~\ref{Theorem1} has an even stronger corollary.

\begin{corollary}For every Markov process $Q$ with positive equilibrium $P^*$
and for a distribution $P\neq P^*$ there exists a Markov process $Q_{\rm DB}$
with the same equilibrium that obeys the detailed balance condition and has the
following property: For every convex function $h$ the time derivative ${\D
H_h(P \| P^*)}/{\D t}$ for $Q$ coincides at point $P$ with the time derivative
of $ H_h(P \| P^*)$ at this point for $Q_{\rm DB}$.
\end{corollary}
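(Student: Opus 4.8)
The plan is to deduce this corollary directly from Theorem~\ref{Theorem1}, exploiting the fact that the quantity $\D H_h(P\|P^*)/\D t$ evaluated at a fixed point $P$ depends on the system only through the velocity vector $\D P/\D t$ at that point. Indeed, along any trajectory $P(t)$ of a first order system, the chain rule gives
\[
\frac{\D H_h(P\|P^*)}{\D t}=\sum_i p^*_i\, h'\!\left(\frac{p_i}{p^*_i}\right)\frac{1}{p^*_i}\frac{\D p_i}{\D t}=\sum_i h'\!\left(\frac{p_i}{p^*_i}\right)\frac{\D p_i}{\D t}\, ,
\]
so that the value of this derivative at $P$ is obtained by pairing the velocity $\D P/\D t$ with the vector $\bigl(h'(p_i/p^*_i)\bigr)_i$, which depends on $P$ and $h$ alone and not on the kinetic coefficients.

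Concretely, I would proceed as follows. First, apply Theorem~\ref{Theorem1} to the given Markov process $Q$: its velocity $\D P/\D t$ at $P$ lies in $\mathbf{Q}^n_{\rm B}(P,P^*)=\mathbf{Q}^n_{\rm DB}(P,P^*)$, hence there exists a Markov process $Q_{\rm DB}$ with detailed balance and the same equilibrium $P^*$ whose right hand side at $P$ equals that of $Q$. Crucially, this $Q_{\rm DB}$ is chosen once and for all, independently of $h$. Then, for each convex $h$, substituting the common velocity vector into the displayed identity shows that $\D H_h(P\|P^*)/\D t$ computed for $Q$ coincides at $P$ with the same quantity computed for $Q_{\rm DB}$. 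This is exactly the asserted property, and it holds simultaneously for all convex $h$ because the single process $Q_{\rm DB}$ matches the whole velocity vector, which is the only data the chain rule uses.

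The only point requiring a little care — and, I expect, the only thing that looks like an obstacle — is that a general convex $h$ need not be differentiable, so $h'$ in the identity above must be interpreted as a one-sided derivative (both one-sided derivatives exist everywhere on $(0,\infty)$, as noted after Eq.~(\ref{ENtropyProd})). Writing out the one-sided time derivative of $t\mapsto H_h(P(t)\|P^*)$ one finds $\sum_i h'_{\pm}(p_i/p^*_i)\,(\D p_i/\D t)$, with the side of $h'$ selected by the sign of $\D p_i/\D t$; this is again a fixed function of the pair $\bigl(P,\D P/\D t\bigr)$ for each $h$. Hence equality of velocities at $P$ forces equality of these one-sided derivatives for $Q$ and $Q_{\rm DB}$, and the subtlety dissolves. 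In short, the corollary is the observation that Theorem~\ref{Theorem1} matches velocities, and every Csisz\'ar--Morimoto functional $H_h$ has a (possibly one-sided) time derivative that factors through the velocity.
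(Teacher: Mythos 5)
Your proposal is correct and is essentially the paper's intended argument: the paper states this corollary immediately after Theorem~\ref{Theorem1} precisely because $\D H_h(P\|P^*)/\D t$ at a fixed $P$ factors through the velocity $\D P/\D t$ via the chain rule, so the single detailed-balance process $Q_{\rm DB}$ supplied by Theorem~\ref{Theorem1} works for all convex $h$ at once. Your extra remark on one-sided derivatives for non-smooth convex $h$ matches the paper's own caveat following Eq.~(\ref{ENtropyProd}) and closes the only delicate point.
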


\section{Nonlinear kinetics: detailed balance versus semi-detailed balance \label{Nonlin}}

The general equations of MAL without any restriction on the reaction rate constants
demonstrate all types of non-trivial dynamic behavior, from multiple steady states to
strange attractors \cite{Yablonskiiatal1991,Ertl1990}. It is not a surprise because the
MAL systems can approximate with arbitrary accuracy any smooth vector field which
preserves the linear conservation laws and positivity of concentrations
\cite{GorbanByYa1986,Kowalski1993}.

The systems with semi-detailed balance give the direct nonlinear generalization of the
general Markov kinetics. They were introduced by Boltzmann for gas kinetics
\cite{Boltzmann1887} and generalized later for MAL systems
\cite{HornJackson1972,GorbanShahzad2011,SzederkHangos2011}. The systems with semi-detailed
balance are the generalized MAL systems with additional relations between rate constants.

To produce these relations, let us follow the classical work \cite{Stueckelberg1952} and
assume that behind the reaction mechanism (\ref{Stoichiometric}) there is the reaction
mechanism with intermediate {\em compounds} $B_{\rho}^{\pm}$ illustrated by
Fig.~\ref{Compounds}. Each compound is associated with a formal input or output complex
$\sum_i\alpha_{\rho i}A_i$ or $\sum_i \beta_{\rho i} A_i$. Such a complex may participate
in several reactions. Let there be $k$ different vectors among
$\{\boldsymbol{\alpha}_{\rho}, \boldsymbol{\beta}_{\rho}\}$
($\boldsymbol{\alpha}_{\rho}=(\alpha_{\rho i})$ and
$\boldsymbol{\beta}_{\rho}=(\beta_{\rho i})$). We denote these different vectors by
$\boldsymbol{\nu}_j$ ($j=1, \ldots, k$). The correspondent complexes are $\Theta_j=\sum_i
\nu_{ji} A_i$. The reaction mechanism (\ref{Stoichiometric}) takes the form of the list
of transitions $\Theta_j \to \Theta_l$ and the extended reaction mechanism is the list of
transitions
\begin{equation}\label{ExtendedStoi}
\Theta_j \rightleftharpoons B_j \to B_l \rightleftharpoons \Theta_l \, .
\end{equation}

Stueckelberg introduced this representation for the collisions in Boltzmann's
equations and used two asymptotic assumptions:
\begin{itemize}
\item{The compounds $B_j$ are in fast equilibrium with the corresponding input or
    output reagents and the reactions $\Theta_j \rightleftharpoons B_j$ in
    (\ref{ExtendedStoi}) are always close to equilibrium (this is the
    quasiequilibrium assumption, QE);}
\item{They exist in very small concentrations compared to other components (this
    leads to the quasi steady state approximation, QSS).}
\end{itemize}
We call the intermediates $B_j$ compounds following the classical work of
Michaelis and Menten \cite{MichaelisMenten1913}. In 1913, they introduced the
same asymptotic assumptions and representation for an enzyme reaction and
demonstrated that in this case the overall catalytic reaction obeys the MAL.

In more general settings, these two assumption, QE and QSS, allow us to produce
the reaction rates for the rates of the overall reactions in the form of the
generalized MAL. The rate of the reaction
$$\sum_i\alpha_{\rho i}A_i \to \sum_i \beta_{\rho i} A_i$$ is the product of two factors,
a standard Boltzmann factor $W_{\rho}$ and a kinetic factor $\varphi_{\rho} \geq 0$:
\begin{equation}\label{generalizedMAL}
r_{\rho}=\varphi_{\rho}W_{\rho}=\varphi_{\rho} \exp\left(\frac{\sum_i\alpha_{\rho i}
\mu_i}{RT}\right)\, ,
\end{equation}
where $\mu_i$ is the chemical potential  of the component  $A_i$. The
corresponding kinetic equation is
\begin{equation}\label{kinur}
\frac{\D N }{\D t}=V \sum_{\rho}r_{\rho} \boldsymbol{\gamma}_{\rho}\, , \;\;
(\gamma_{\rho i}=\beta_{\rho i} - \alpha_{\rho i})\, .
\end{equation}
Here $N$ is the vector of composition ($N_i$ is the amount of $A_i$), and $V$
is the volume.

We use the notation $c_i$ for the concentration of $A_i$, $c$ is the vector of
concentrations, $\varsigma_j$ is the concentration of $B_j$. The chemical
potentials $\mu_i$ of the components  $A_i$ are the partial derivatives of the
free energy density, $\mu_i=\partial f(c,T)/\partial c_i$. The standard
thermodynamic assumption about strong convexity of the function $f(c,T)$ for
all $T$ is accepted.

Let us demonstrate how the generalized MAL follows from the QE and QSS approximations
(for more details see Ref. \cite{GorbanShahzad2011}). The thermodynamic equilibria for the extended mixture are defined as the
conditional minima of the free energy $F$. The free energy of a mixture of
$A_i$ with small admixtures of the compounds $B_j$ is:
\begin{equation}\label{FreeEn1}
F=Vf(c,T)+VRT \sum_{j=1}^q \varsigma_j \left(\frac{u_j(c,T)}{RT}+\ln
\varsigma_j-1\right)\, .
\end{equation}
The entropic terms $VRT \varsigma_j \ln \varsigma_j$ in this expression corresponds to the
ideal gas equations $p_j=\varsigma_jRT$ for the partial pressure of the small admixtures of the compounds $B_j$.
This ideal gas low may be valid not only in gases but for osmotic pressure of small admixtures in solutions
(the Morse equation).

The thermodynamic equilibria of $B_j$ are: $\varsigma_j^{\rm
eq}=\varsigma_j^*/Z$, where $Z$ is a positive number and $\varsigma_j^*$ is the
{\em standard equilibrium}:
\begin{equation}\label{StandEquili}
\varsigma_j^*(c,T)=\exp\left(-\frac{u_j(c,T)}{RT}\right)\, .
\end{equation}
The thermodynamic equilibrium condition of the reactions $\Theta_j
\rightleftharpoons B_j$ under the condition of smallness of $\varsigma_j$
(QE+QSS) can be solved explicitly:
\begin{equation}\label{equilibrationEq}
\varsigma_j^{\rm qe}=\varsigma^*_j(c,T)\exp\left(\frac{\sum_i \nu_{ji}
\mu_i(c,T)}{RT}\right)\, .
\end{equation}

The smallness of the concentration of the compounds implies that  the rates of the
reactions $B_i \to B_j$ in the extended mechanism (\ref{ExtendedStoi}) are linear
functions of their concentrations. Let the rate constants for this first order kinetics
be $q_{ji}$.

In the selected approximations the extended reaction mechanism (\ref{ExtendedStoi})
returns to the form $\Theta_j \to \Theta_l$. The reaction rate of the transition
$\Theta_j \to \Theta_l$ in the quasiequilibrium approximation is $r_{lj}=q_{lj}
\varsigma_j^{\rm qe}$. This is exactly the generalized MAL (\ref{generalizedMAL}) with
$$\varphi_{lj}=q_{lj} \varsigma_j^*,\, \boldsymbol{\alpha}_{\rho}=\boldsymbol{\nu}_j, \,
\boldsymbol{\beta}_{\rho}=\boldsymbol{\nu}_l \, .$$

At the equilibrium $\varsigma^*/Z$, the first order kinetics of compounds should satisfy the general balance
condition (\ref{MasterEquilibrium}):
$$\sum_j q_{lj} \varsigma_j^*= \sum_j q_{jl} \varsigma_l^* \, .$$
Therefore, the kinetic factors $\varphi_{\rho}$ satisfy the identity of semi-detailed
balance:
\begin{equation}\label{semidetailed}
\sum_{\rho, \,\boldsymbol{\alpha}_{\rho}=\boldsymbol{\nu}} \varphi_{\rho}\equiv
\sum_{\rho, \,\boldsymbol{\beta}_{\rho}=\boldsymbol{\nu}} \varphi_{\rho}
\end{equation}
for any vector $\boldsymbol{\nu}$ from the set of all vectors $\{\boldsymbol{\alpha}_{\rho},
\boldsymbol{\beta}_{\rho}\}$. This identity is exactly the Markov balance condition
(\ref{MasterEquilibrium}) for kinetics of compounds with equilibrium
$\varsigma^*/Z$. It has a very transparent sense: the thermodynamic
equilibrium is, at the same time, the equilibrium for the first
order kinetics of compounds, i.e. the it satisfies the balance
condition (\ref{MasterEquilibrium}) for master equation.

Let us assume that the Markov kinetics of compounds satisfies
the detailed balance condition (\ref{detBal}) at the thermodynamic equilibrium:
$$q_{lj}\varsigma_j^*=q_{jl}\varsigma^*_l \, .$$
Then the kinetic factors $\varphi_{\rho}$ satisfy the condition of detailed balance:
\begin{equation}\label{NlinDetailed}
\varphi_{\rho}^+\equiv \varphi_{\rho}^- \, ,
\end{equation}
where $\varphi_{\rho}^+$ is the kinetic factor for the direct reaction and
$\varphi_{\rho}^-$ is the kinetic factor for the reverse reaction.

This detailed balance condition assumes that the sums in the left and right hand sides of Eq. (\ref{semidetailed})
are equal term by term. Therefore, it is stronger than the semi-detailed balance condition.

For linear systems, the semi-detailed balance condition turns into the standard balance
condition (\ref{MasterEquilibrium}) and the detailed balance condition
(\ref{NlinDetailed}) turns into (\ref{detBal}). Of course, the class of systems with
semi-detailed balance is much wider than the class of systems with detailed balance.
Nevertheless, locally they coincide: {\em for given thermodynamic functions
(\ref{FreeEn1}) and any given concentrations and temperature,  the cone of possible
velocities for systems (\ref{kinur}) with semi-detailed balance coincides with the cone of
the possible velocities for the systems with detailed balance.}

Indeed, for the given values of concentrations we can perform the following three
operations, (i) return from the generalized MAL to the first order kinetic equations of
compounds, (ii) use Theorem~\ref{Theorem1} and find the system of compounds with detailed
balance, which has the same velocity at the same point, and (iii) return back, to the
generalized MAL. As a result, we get a kinetic system for the components $A_i$ with
detailed balance, the same free energy $Vf(c,T)$ (\ref{FreeEn1}), and the same velocity
at the selected values of concentrations.

\section{Conclusion}

The definition of detailed balance includes the rates of all transitions at equilibrium
but observability of all these rates together is a very special situation. Typically, one
can observe the overall system velocity, $\D P/\D t$, or just some components of this
velocity but not the rates of individual transitions. According to our results, if we
know the equilibrium distribution $P^*$ and observe the system velocity at one
nonequilibrium point $P$ then we can never distinguish a general system from the systems
with detailed balance. This is true for Markov kinetics as well as for the systems with
the generalized MAL; detailed balance can never be distinguished from the semi-detailed
balance if we know the equilibrium and observe the velocity at one nonequilibrium point.
The difference between velocities of the general kinetic systems and the systems with
detailed balance is hidden in the correlations between different nonequilibrium states
(or, for example, in the continuous pieces of trajectories). The cone of possible
velocities at a nonequilibrium state $P$ is a piece-wise constant function of $P$, which
can be constructed explicitly for the systems with detailed balance
(Fig.~\ref{3stateCones}), and the same construction is valid for the general kinetics.
These results seem to be rather surprising.

For the nonlinear mass action systems, the systems with semi-detailed balance give the
proper analogue of the general Markov kinetics. The conditions of semi-detailed balance
were invented for the Boltzmann equation by Boltzmann \cite{Boltzmann1887}, studied by
Stueckelberg \cite{Stueckelberg1952} and rediscovered for the mass action kinetics by
Horn and Jackson \cite{HornJackson1972}. Recently \cite{GorbanShahzad2011}, it was proved
that the generalized MAL with the semi-detailed balance condition always follows from the
Markov kinetics of compounds in the Michaelis--Menten--Stueckelberg asymptotic. The class
of the systems with semi-detailed balance is wider than the class of systems with
detailed balance. Nevertheless, for a given equilibrium and for any given value of
concentration these two classes have the same sets of possible velocities in the
distribution space.

\end{document}